\documentclass[twocolumn, a4paper, unpublished]{quantumarticle}
\pdfoutput=1
\usepackage[utf8]{inputenc}
\usepackage[english]{babel}
\usepackage[T1]{fontenc}
\usepackage{hyperref}
\usepackage{mathtools}
\usepackage{amsfonts}
\usepackage{amsthm}
\usepackage{braket}
\usepackage{multirow}
\usepackage{adjustbox}
\usepackage{tikz}
\usepackage{hyphenat}
\usepackage[noabbrev, capitalize]{cleveref}
\usepackage[dvipsnames]{xcolor}

\usepackage[compat=newest]{yquant}
\useyquantlanguage{groups}
\usetikzlibrary{quotes, fit, calc}

\PassOptionsToPackage{compress}{natbib}
\usepackage[numbers]{natbib}
\newtheorem*{lemma}{Lemma}
\newcommand{\R}{\mathbb{R}}
\newcommand{\C}{\mathbb{C}}
\DeclareMathOperator{\diag}{diag}

\definecolor{blue}{RGB}{0, 117, 185}
\definecolor{red}{RGB}{237, 34, 11}
\newcommand\red[1]{{\color{red}{#1}}}
\newcommand\blue[1]{{\color{blue}{#1}}}

\title{Faster matrix product state preparation by exploiting symmetry\hyp induced block\hyp sparsity}

\author{Felix Rupprecht}
\orcid{0009-0004-2738-9711}
\email{felix.rupprecht@dlr.de}
\affiliation{Institute of Quantum Technologies, German Aerospace Center, 89081 Ulm, Germany}
\author{Sabine Wölk}
\orcid{0000-0001-9137-4814}
\affiliation{Institute of Quantum Technologies, German Aerospace Center, 89081 Ulm, Germany}
\affiliation{Center for Integrated Quantum Science and Technology (IQST), Ulm University, 89081 Ulm, Germany}
\date{}

\begin{document}
\begin{abstract}
    Matrix product states (MPS) serve as a key tool for studying quantum systems from chemistry and condensed\hyp matter physics, making their preparation on quantum computers an important task in interfacing classical and quantum simulation.
    Many systems of interest have $U(1)$\hyp symmetries induced by particle number and spin projection conservation, allowing to restrict the MPS tensors to be of block\hyp sparse form, a property widely used in the implementation of classical algorithms such as the density matrix renormalization group.
    We reduce the cost of fault-tolerantly preparing block\hyp sparse MPS within the standard ancilla\hyp assisted linear\hyp depth approach by implementing row and column permutations that transform the block\hyp sparse matrices into block\hyp diagonal form.
    These block\hyp diagonal unitaries are then implemented via unitary synthesis, with the cost being determined by the size of the largest block.
    In this context, we modify the unitary synthesis approach of Berry et al. in order to reduce the Toffoli cost for real\hyp valued unitaries by a factor of $\sqrt{2}$.
    In numerical benchmarks, we achieve Toffoli cost improvement factors of $10 - 30$ compared to the state\hyp of\hyp the\hyp art for MPS of various molecular systems.
\end{abstract}
\maketitle

\section{Introduction}
\label{sec:introduction}

The main difficulty in direct simulations of fermionic many\hyp body systems from chemistry and condensed\hyp matter physics is the exponential growth of the corresponding Hilbert spaces with increasing particle numbers.
In order to allow computational investigations of such systems, even in the presence of strong correlation, effective wave function representations that restrict to subspaces of the full Hilbert spaces have been developed, and sophisticated methods for calculations with such compressed states implemented~\cite{Eriksen2020}.
One of the most prominent compressed wavefunction formats are matrix product states (MPS), which follow an entanglement area law and efficiently parameterize the ground states of local one\hyp dimensional gapped Hamiltonians~\cite{Cirac2021}.
Closely related to the notion of MPS is the density matrix renormalization group (DMRG) algorithm~\cite{schollwock2011}, a widely used approach for finding approximate ground states of strongly correlated many\hyp body systems~\cite{zhai2026, liu2025, legeza2026}.

Due to their conceptual and practical relevance, MPS have taken a prominent role at the interface of classical and quantum computation, with the task of preparing MPS on quantum computers gaining more attention in recent years~\cite{Fomichev2024, berry2024, kottmann2026, Schoen2005, Malz2024, Ran2020, Smith2024, Rudolph2022}.
While some of the papers consider the preparation task for particular subclasses of MPS~\cite{Malz2024, Smith2024} or methods well\hyp suited for NISQ devices~\cite{Rudolph2022}, the standard method for the preparation of generic MPS is the sequential, ancilla\hyp assisted technique proposed in~\cite{Schoen2005}.
Motivated by the problem of preparing high\hyp quality initial states~\cite{Lee2023} for ground\hyp state energy estimation algorithms such as quantum phase estimation (QPE), the resource requirements for this approach in the framework of fault\hyp tolerant quantum computing were analyzed and optimized in~\cite{Fomichev2024, berry2024, kottmann2026}.
Besides ground\hyp state energy estimation, MPS have also been investigated in the context of quantum machine learning~\cite{Dilip2022, Jobst2024}, variational algorithms~\cite{Rudolph2023, Haghshenas2022}, and NISQ applications~\cite{Chertkov2022, Anselme2024}.

In resource analyses of applying QPE to relevant molecular systems, a common assumption~\cite{Lee2021, tubman2018} is that the cost of preparing high\hyp quality initial states is insignificant compared with the cost of the actual phase estimation part of the algorithm.
In \cite{berry2024}, this question was investigated more thoroughly for the FeMo\hyp cofactor with MPS from classical DMRG calculations chosen as the initial states.
The Toffoli costs of preparing the MPS were indeed of smaller magnitude than the cost of the QPE part, however, recent algorithmic advances have reduced or even closed this gap~\cite{Low2025}.
In the pursuit of further bringing down the cost of end\hyp to\hyp end ground\hyp state energy estimation and allowing the use of higher-quality initial states, it is therefore important to increase the efficiency of MPS state preparation, which we aim to do in this paper.

We follow the approach of~\cite{Fomichev2024, berry2024, kottmann2026, Schoen2005}, in which a matrix product state
\begin{equation*}
    \ket{\Phi} = \sum_{d_1, \dots, d_{n}} M_{1}^{d_1} M_{2}^{d_2}\cdots M_{n}^{d_{n}} \ket{d_1,\dots,d_{n}}
\end{equation*}
is implemented via a quantum circuit of the form shown in \cref{fig:linear_mps_prep}.
The MPS wave function for a system of $n$~sites, where each site $i \in [n]\coloneq \{1, \dots, n\}$ is equipped with a local basis $(\ket{d_i^{1}}, \dots, \ket{d_i^{D}})$ of size~$D$, stores the coefficients of the multi\hyp site product basis states~$\ket{d_1,\dots,d_n}$ as a product of matrices~$M_i^{d_i}$.
Here $M_i\coloneq (M_{i})_{b_{i}, b_{i+1}}^{d_i}$ is a rank-3 tensor of shape~$(B_{i-1}, D, B_i)$, where $B_{i-1}$ and $B_i$ are called the bond dimensions of $M_i$ and $B_0 = B_n = 1$.
The maximal bond dimension of the MPS is denoted by~$\chi$.
For ease of exposition, we assume throughout the paper that all non\hyp trivial bond dimensions~$B_i$ are equal to~$\chi$.

Before preparing the MPS on a quantum computer, it is transformed into right\hyp canonical form, i.e., the tensors satisfy $\sum_{d_i=1}^D M_i^{d_i}(M_i^{d_i})^{\dagger} = I$ for all $i \in [n]$.
This can always be achieved~\cite{schollwock2011} by means of repeated singular-value transformations, as MPS have a gauge degree of freedom, i.e., for any invertible matrix $X$, replacing $M_i^{d_i}$ and $M_{i+1}^{d_{i+1}}$ by $M_i^{d_i}X$ and $X^{-1}M_{i+1}^{d_{i+1}}$ leaves the MPS invariant.
Due to the right\hyp canonical form, the columns of the matrix
\begin{equation}
\label{eq:site_unitary}
    U_i' \coloneq \begin{bmatrix} M_i^{1} & \cdots & M_i^D \end{bmatrix}^\top\text,
\end{equation}
which for $1 < i < n$ has the shape $(d\chi)\times \chi$, are orthonormal and therefore may be expanded to a unitary
\begin{equation*}
    U_i \coloneq \begin{bmatrix} U_i' & * \end{bmatrix}
\end{equation*}
with $*$ being a placeholder submatrix.
The MPS is then prepared (cf. \cref{fig:linear_mps_prep}) on $n$ site registers, each consisting of $\lceil \log(D) \rceil$ qubits, with the help of a $\lceil \log(\chi) \rceil$\hyp qubit ancilla register where here and in the rest of the paper all logarithms are base-2.
As both the site register and the ancilla register on which the first unitary~$U_1$ acts are in the $\ket{0}$\hyp state, the implementation of $U_1$ is reduced to state preparation~\cite{berry2024, Low2024} of the state $U_1 \ket{0}\ket{0}$.
The remaining unitaries~$U_i$ are implemented via unitary synthesis, where, due to the incoming site registers being $\ket{0}$, only the values in the first $\chi$~columns, i.e., the submatrix~$U'_i$, are relevant.
It was explained in~\cite{berry2024} that the last unitary~$U_{n}$ may be integrated into the circuit of $U_{n-1}$.
However, for simplicity, we do not apply this in our work.
\begin{figure}[t]
    \centering
    \begin{tikzpicture}
        \begin{yquant}[register/separation=0mm]
            qubit {$\ket{0}$} site0;
            [name=anc0] qubit {$\ket{0}$} anc0;
            qubit {$\ket{0}$} site1;
            [name=anc1] nobit anc1;
            qubit {$\ket{0}$} site2;
            [name=anc2] nobit anc2;
            qubit {$\ket{0}$} site3;
            nobit anc3;

            ["north:$r$" {font=\protect\footnotesize, inner sep=1pt}]
            slash site0, site1, site2, site3;
            ["north:$w$" {font=\protect\footnotesize, inner sep=1pt}]
            slash anc0;

            \yquantset{operator/separation=2.5mm}
            box {$U_1$} (site0, anc0);
            hspace {2.5mm} anc0, anc1;
            discard anc0;
            [name = i0, operator/separation=-2.5mm] init anc1;
            \draw[shorten <=-.5\pgflinewidth, shorten >=-.5\pgflinewidth] (anc0 -| i0) -- (i0);

            [operator/separation=0pt]
            box {$U_2$} (site1, anc1);
            hspace {2.5mm} anc1, anc2;
            discard anc1;
            [name = i1, operator/separation=-2.5mm] init anc2;
            \draw[shorten <=-.5\pgflinewidth, shorten >=-.5\pgflinewidth] (anc1 -| i1) -- (i1);

            [operator/separation=0pt]
            box {$U_3$} (site2, anc2);
            hspace {2.5mm} anc2, anc3;
            discard anc2;
            [name = i2, operator/separation=-2.5mm] init anc3;
            \draw[shorten <=-.5\pgflinewidth, shorten >=-.5\pgflinewidth] (anc2 -| i2) -- (i2);

            [operator/separation=0pt]
            box {$U_4$} (site3, anc3);

            output {$\ket{0}$} anc3;
            output {$\ket{\Phi}$} (-site3);
        \end{yquant}
    \end{tikzpicture}
    \caption{
    Preparation circuit~\cite{Fomichev2024, berry2024, kottmann2026, Schoen2005} of a four\hyp site MPS~$\ket{\Phi}$ with maximal bond dimension~$\chi$ and site dimension~$D$.
    The second register is an ancilla register of size~$w\coloneq \lceil \log(\chi) \rceil$, while the other registers, consisting of $r\coloneq \lceil \log(D) \rceil$~qubits each, represent the sites. 
    Only the first $\chi$ columns of the unitaries~$U_i\coloneq \begin{bmatrix} U_i' & * \end{bmatrix}$ are specified in~\eqref{eq:site_unitary}, while the remaining entries can be chosen freely.
    By construction, the ancilla register is brought back to $\ket{0}$ at the end.
    }
    \label{fig:linear_mps_prep}
\end{figure}

In many problems of interest, such as fermionic lattice models~\cite{Kan2025} and electrons in molecules~\cite{berry2024}, the total particle number~$\hat{N}=\sum_i \hat{N}^i$ and the spin\hyp projection/magnetization~$\hat{S}_z=\sum_i \hat{S}_z^i$ both commute with the Hamiltonian and are therefore good quantum numbers corresponding to $U(1)$\hyp symmetries.
By selecting a joint eigenbasis of the local number and spin projection operators~$(\hat{N}^i, \hat{S}_z^i)$ for the Hilbert spaces of the sites~$i$, an MPS can be made invariant under these symmetries by constraining its tensors with a local quantum number conservation rule~\cite{schollwock2011, singh2011, Bachmayr2022}.
As we will recall in \cref{sec:mps_preparation}, the rule, which can be formulated for any global abelian symmetry, leads to the tensors~$M_i$ and consequently $U_i'$ having a block\hyp sparse structure.
This has been widely used for improving the efficiency of MPS algorithms~\cite{zhai2023, Hasschild2018, fishman2022} such as DMRG, raising the question of whether the block\hyp sparsity may also be exploited to improve the cost of MPS preparation on quantum computers.

The contributions of this paper are as follows.
We answer the question from above in the affirmative by using the block\hyp sparsity of $M_i$ to find more efficient ways of implementing the unitaries~$U_i$.
As is usual in fault\hyp tolerant resource estimation, we take the number of Toffoli gates together with the number of required qubits as the figures of merit.
The main idea, explained in \cref{sec:mps_preparation}, is to exploit the block\hyp sparsity of $U_i'$ by implementing column and row permutations, together with selecting suitable values for the free entries $*$, such that the matrix~$U_i = \begin{bmatrix} U'_i & *\end{bmatrix}$ is transformed into a block\hyp diagonal unitary.
The resulting matrix is then implemented via unitary synthesis, with the Toffoli cost mostly determined by the size of the largest block.
Since for most systems of interest the MPS matrices are real\hyp valued, in \cref{sec:unitary_synthesis} we modify the unitary synthesis method of Berry et al.~\cite{berry2024} based on Givens rotations to reduce the number of Toffoli gates required for real unitaries by a factor of $\sqrt{2}$.
We benchmark the overall preparation approach on various molecular systems, such as active spaces of the P450 enzyme~\cite{Goings2022}, yielding Toffoli cost improvement factors of $10 - 30$.

The algorithms of the paper are implemented within the \texttt{Qualtran}~\cite{harrigan2024} framework, with the computationally more expensive Givens decomposition outsourced into Rust code.
The full code and other assets created for this paper are available on Zenodo~\cite{Rupprecht2026Zenodo}.

\section{Unitary synthesis via Givens rotations}
\label{sec:unitary_synthesis}

In this section, we develop a variation of the unitary synthesis approach of Berry et al.~\cite{berry2024} that allows for a reduction in the number of Toffoli gates in the case of real unitaries by roughly a factor of $\sqrt{2}$.

Given a unitary matrix~$U$ of dimension~$l$, the task of (exact) unitary synthesis is to find a sequence of quantum gates~$C_1, \dotsc, C_k$ such that $U=C_k \dotsm C_1$.
A standard building block for creating such sequences are Givens rotation matrices
\begin{equation*}
    G(\theta, \phi) \coloneq
    \begin{bmatrix}
        e^{i\phi} \cos(\theta) & -\sin(\theta)\\
        e^{i\phi} \sin(\theta) & \cos(\theta)
    \end{bmatrix}
\end{equation*}
with $\phi, \theta \in \R$.
The unitary Givens rotations have the property that for $a,b \in \C$, there exist angles~$\theta$ and $\phi$ such that $\begin{bmatrix} a & b \end{bmatrix} G(\phi, \theta) = \begin{bmatrix} \tilde{a} & 0\end{bmatrix}$
for some $\tilde{a} \in \C$.
If $a$ and~$b$ are real, $\phi$ can be chosen to be zero.
By using a suitable sequence~$G_1, \dotsc, G_k$ of Givens rotations, each being embedded in an $l$\hyp dimensional matrix and acting on neighboring columns of~$U$, all off\hyp diagonal elements of $U$ can be zeroed, and $U=\tilde{D}\, G_k\dotsm G_1$ for some diagonal phase matrix~$\tilde{D}$.
This is a standard approach in numerical linear algebra for computing QR\hyp decompositions~\cite{golub}.

Since Givens rotations may be interpreted as beam splitters, the problem of unitary synthesis via Givens rotations can be seen as the task of building a multiport interferometer from beam splitters.
In this context, it was found in~\cite{Clements2016} that the sequence of Givens rotations for diagonalization may be chosen in a way that allows applying a group of rotations in parallel, resulting in a decomposition~$U= \tilde{D} \tilde{L}_{l} \dotsm \tilde{L}_1$ of $l$ layers of parallel Givens rotations and a final diagonal phase matrix~$\tilde{D}$.
The layers~$\tilde{L}_i$ alternate between the two forms
\begin{equation}
\label{eq:form}
    \begin{bsmallmatrix}
        *&*&&&&&&\\
        *&*&&&&&&\\
        &&*&*&&&&\\
        &&*&*&&&&\\
        &&&&*&*&&\\
        &&&&*&*&&\\
        &&&&&&*&*\\
        &&&&&&*&*\\
    \end{bsmallmatrix},
    \quad
    \begin{bsmallmatrix}
        *&&&&&&&\\
        &*&*&&&&&\\
        &*&*&&&&&\\
        &&&*&*&&&\\
        &&&*&*&&&\\
        &&&&&*&*&\\
        &&&&&*&*&\\
        &&&&&&&*\\
    \end{bsmallmatrix}
\end{equation}
where each $2\times 2$ block is a Givens rotation and the lonely stars are~$1$.
Due to its low depth, this decomposition is not only useful in optics but was also used in \cite{berry2024} to reduce the Toffoli cost of unitary synthesis compared to prior methods based on Householder reflections~\cite{Low2024}.

Motivated by the interferometer analogy, in~\cite{berry2024} the beam splitters corresponding to a layer of Givens rotations are decomposed into two 50:50\hyp beam splitters with a phase shift in between, followed by a diagonal phase matrix, which can be incorporated into the next layer.
Instead of following this approach, we write a Givens layer in terms of $R_z$ and $R_y$ gates
\begin{equation*}
    R_z(\phi) \coloneq
    \begin{bmatrix}
        e^{-i\phi}  & 0\\
        0           & e^{i\phi}
    \end{bmatrix},
    \,
    R_y(\theta) \coloneq
    \begin{bmatrix}
        \cos(\theta) & -\sin(\theta)\\
        \sin(\theta) & \cos(\theta)
    \end{bmatrix}
\end{equation*}
via
\begin{equation}
\label{eq:decomp}
    G(\theta, \phi) \diag(e^{i\beta_1}, e^{i\beta_2})= \diag(e^{i\beta'}, e^{i\beta'}) R_y(\theta) R_z(\phi')
\end{equation}
with $\beta'\coloneq(\phi + \beta_1 + \beta_2)/2$ and $\phi'\coloneq (\beta_2 - \beta_1 - \phi)/2$, where $\diag(e^{i\beta_1}, e^{i\beta_2})$ is a potential diagonal phase matrix from applying the same decomposition on the Givens layer to the right.
Note that the number of variables on the right and left side of~\eqref{eq:decomp} differ as $\beta_1$ can be made zero by including it in $\phi$.
Sequential application of relation~\eqref{eq:decomp} allows one to write
\begin{equation}
\label{eq:lay}
    U=D L_{l} \cdots L_{1}
\end{equation}
with the $L_j$ alternating between the two forms in~\eqref{eq:form}, where each $2\times 2$ block is given by $R_y(\theta)R_z(\phi')$.
The matrix~$D$ is again a diagonal phase matrix.
\begin{figure}[b]
    \centering
    \begin{adjustbox}{width=\columnwidth}
    \begin{tikzpicture}
        \begin{yquant}[register/separation=2mm, operator/separation=2mm]
            qubit {} system[3];
            qubit {} target;
            qubit {} phasegrad;

            hspace {0.1cm} -;
            box {\rotatebox{90}{Add($-1$)}} (system, target);
            text {$\ket{\psi}$} phasegrad;
            hspace {0.6cm} -;
            [name = init]
            [after=target] qubit {$\ket{0}$} angles[2];

            ["north:$b$" {font=\protect\footnotesize, inner sep=1pt}]
            slash angles, phasegrad;

            [name = load, operator style={only at={0}{rounded corners}}]
            box {\Ifnum\idx<1\rotatebox{90}{QROAM}\Else{\Ifnum\idx<2$\theta_i$\Else$\phi_i$\Fi}\Fi} (system), angles;

            [name = rz, operator style={only at={1,2}{rounded corners}}]
            box {\Ifcase\idx\space$R_z(\phi_i)$\Or$\psi$\Else$\phi_i$\Fi} target, phasegrad, angles[1];

            [name = ry, operator style={only at={1,2}{rounded corners}}]
            box {\Ifcase\idx\space$R_y(\theta_i)$\Or$\psi$\Else$\theta_i$\Fi} target, phasegrad, angles[0];

            box {M$_X$} angles;
            settype {cbit} angles;
            [name = discard]
            text {$m_{x_{\The\numexpr\idx+1\relax}}$} angles;
            discard angles;

            hspace {-0.4cm} -;
            box {\rotatebox{90}{Add($+1$)}} (system, target);
            text {$\ket{\psi}$} phasegrad;
            hspace {0.1cm} -;
        \end{yquant}
        \draw (load-0) -- (load-1) -- (load-2);
        \draw (rz-0) -- (rz-1) -- (rz-2);
        \draw (ry-0) -- (ry-1) -- (ry-2);
        \draw[dashed]
            ($(init-1.north west)       + (-.1,  .25)$)
            -| ($(rz-0.north west)      + (-.1,  .1)$)
            -- ($(rz-0.north east)      + ( .1,  .1)$)
            |- ($(discard-1.north east) + (  0,  .1)$)
            -- ($(discard-1.south east) + (  0, -.1)$)
            -| cycle;
    \end{tikzpicture}
    \end{adjustbox}
    \caption{
    Circuit for implementing a layer of $R_z$/$R_y$ rotations in \eqref{eq:lay} of the forms \eqref{eq:form} on four qubits up to sign flips from $X$\hyp measurements~$M_X$ of the angle registers and the QROAM junk registers.
    The measurement results of the angle registers are $m_{x_1}$ and $m_{x_2}$.
    We do not show the QROAM junk registers.
    The bottom two registers are the ancilla angle registers into which the rotation angles are loaded via QROAM.
    The rotations use a phase-gradient state~$\ket{\psi}$ as a resource state together with the loaded angles.
    We use rounded corners for gates on qubits that are not being changed by the gates and merely serve as address/resource states.
    As printed, the right layer in~\eqref{eq:form} is implemented.
    Removing the decrement/increment operators yields the left layer in~\eqref{eq:form}. 
    For real unitaries, the $R_z$ rotations are not needed, and the elements within the dotted area can be removed.
    }
    \label{fig:givens_layer}
\end{figure}

For real-valued $U$, as mentioned above, the angle $\phi$ in the Givens rotations of the layers $\tilde{L}_j$ can be chosen to be zero.
As $G(\theta, 0)=R_y(\theta)$ it is therefore possible to remove the $R_z$~gates in the decomposition~\eqref{eq:lay}, which then solely consists of layers of $R_y$~rotations with a final diagonal phase matrix~$D$ whose diagonal entries are $\pm 1$.

We implement these $R_y$ layers in the real case, and $R_z$/$R_y$ layers in the general case, within the well\hyp known phase gradient framework~\cite{berry2024, kottmann2026, Gidney2018, Kitaev2002}, as depicted in \cref{fig:givens_layer}.
For any layer, the first step in doing so is to load the rotation angles for the individual $2\times 2$ blocks.
We first look at the left matrix in~\eqref{eq:form}.
Working within a $\lceil \log(l) \rceil$\hyp qubit register, with suitably padded unitaries in case $l$ is not a power of two, we split off the last qubit~$q$.
This qubit defines the two\hyp dimensional subspaces of the blocks, while the state~$\ket{i}^r$ of the remaining qubits enumerates the blocks.
We can thus load $b$\hyp bit binary representations of the rotation angles~$\phi_i$ and $\theta_i$ for the individual blocks into ancilla registers using QROAM~\cite{Berry2019} with $\ket{i}^r$ as the address states.
In the real case, it suffices to load the angles~$\theta_i$.

Once the angles are loaded, the $R_y$ and, if necessary, $R_z$~rotations on qubit~$q$ can be implemented~\cite{kottmann2026, Sanders2020} via controlled addition/subtraction of the angle states and a $b$\hyp qubit phase-gradient state~$\ket{\psi}$.
The desired rotation axis is selected by a suitable change of the computational basis.
A rotation has Toffoli cost~$b-2$ and does not change the phase-gradient state~$\ket{\psi}$, whose register size~$b$ is logarithmic in the accuracy of the stored rotation angles and, in most practical situations, is on the order of $15$–$20$.
For the definition and a way of preparing~$\ket{\psi}$ we refer to~\cite{Sanders2020}.
As it is left unchanged by the rotations it only needs to be prepared once.
After the rotations have been applied, the angle registers are measured in the $X$~basis and discarded, introducing phase flips as described in Appendix~\labelcref{sec:measurement_based_uncomputation}.
The phase flips can be reconstructed from the measurement results and will be fixed later.

The Toffoli cost of the angle loading together with the rotation is given by $\lceil l/(2\Lambda) \rceil + 2b\Lambda$, where $\Lambda$ is a power of two that acts as the control parameter for the Toffoli/ancilla trade-off for the QROAM~\cite{Berry2019}.
Here and in the rest of the paper we neglect some constant minor terms such as the $-2$ of the rotations and rather use the upper bound stated as this makes the formulas more compact.
Moreover, accurately accounting for some of those terms is sometimes error-prone.
For a given~$\Lambda$, the number of ancilla qubits required for the QROAM is given by $2b(\Lambda -1) +  \lceil \log(l/\Lambda) \rceil$.
The optimal Toffoli cost in the trade-off is achieved for $\Lambda \approx \sqrt{l/(4b)}$, with a Toffoli count of about $\sqrt{4bl}$.
The costs in the real case, in which the bitsize of the QROAM target register is only half that of the general case, are then simply given by replacing $b$ with $b/2$ in the formulas leading to an optimal Toffoli cost of $\sqrt{2bl}$.

In order to implement the right matrix in~\eqref{eq:form}, we follow~\cite{berry2024} and decrement the full register~$\ket{j}\mapsto \ket{j-1}$ with Toffoli cost~$\lceil \log(l)\rceil -1$, before applying the same logic as for the left matrix and incrementing the register again.
The pair of decrement/increment operations may be interpreted as applying row and column permutation matrices that shift the rows and columns by one, transforming the right matrix in~\eqref{eq:form} to the left matrix with the fourth block set to the identity.
\begin{figure*}[t]
    \centering
    \begin{adjustbox}{width=\textwidth}
        \includegraphics{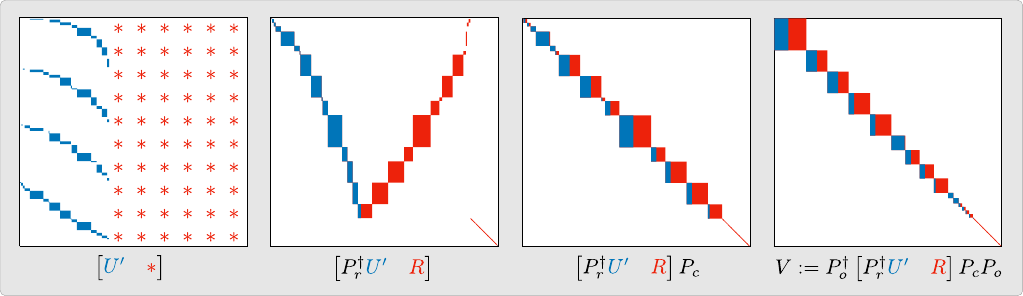}
    \end{adjustbox}
    \caption{
    Steps for transforming and extending the block\hyp sparse matrix $U'\coloneq U_i'$, represented by the blue part in the leftmost box, into a block\hyp diagonal unitary $V$ using permutation matrices.
    The content of the red rectangles~$R$, which are put into the initially unspecified part $*$ of the matrix in the second step, is chosen such that the emerging square blocks are unitary.
    }
    \label{fig:transform_to_block_diagonal}
\end{figure*}

It remains to look at the final diagonal phase matrix~$D \coloneq \diag(e^{i\gamma_1}, \dotsc, e^{i\gamma_l})$.
In the complex case, using again the phase gradient framework, $l$~phase angles with bitsize~$b$ are loaded, after which an $R_z$ rotation with these angles on an ancilla $\ket{1}$\hyp qubit is applied.
This is explained in more detail in~\cite{kottmann2026}.
Its cost is again dominated by the angle loading and is given by $\lceil l/(2\Lambda) \rceil + 2b\Lambda + b$, where, as in~\cite{berry2024}, we use the fact that, up to the trailing $b$, the ancilla/Toffoli trade-off for loading $l/2$~items of size~$2b$ is the same as for loading $l$~items of size~$b$, letting us reuse the $\Lambda$ from the layers.
Analogously to the layer rotations, the angle register is measured in the $X$~basis, causing sign flips that need to be fixed.
Since in the real case the matrix~$D$ is diagonal with entries~$\pm 1$, the entries themselves may be seen as sign flips, and no phase rotation is required.

In order to fix the sign flips, both from the layers and the phase matrix, we notice that the sign flips in layer~$L_i$ may be commuted through the subsequent layers~$L_{i+1}, \dotsc, L_l$ by dynamically adapting the angles of the $R_y$ rotations in those layers according to the relation
\begin{equation*}
   R_y(\theta)\diag(-1, 1)=\diag(-1, 1)R_y(-\theta).
\end{equation*}
The $R_z$-gates commute with the diagonal sign matrix.
At the end, the commuted sign flips can be combined with the sign flips from implementing $D$ and fixed using the standard approach from~\cite{Berry2019}.
The Toffoli cost for this is $\lceil l/\Lambda' \rceil + \Lambda'$, where $\Lambda'$ is again a Toffoli/ancilla trade-off parameter that is a power of two controlling the ancilla cost of $\Lambda' + \lceil \log(l/\Lambda')\rceil$.

Summing all up, the overall Toffoli cost of our unitary synthesis approach is upper bounded by
\begin{equation*}
    (l+1) \left( \Bigl \lceil \frac{l}{2\Lambda} \Bigr \rceil + 2b\Lambda \right) + b + \Bigl \lceil \frac{l}{\Lambda'} \Bigr \rceil + \Lambda' + l \lceil \log(l)\rceil
\end{equation*}
in the general case, and
\begin{equation*}
    l \left( \Bigl \lceil \frac{l}{2\Lambda} \Bigr \rceil + b\Lambda \right) + \Bigl \lceil \frac{l}{\Lambda'} \Bigr \rceil +  \Lambda' + l \lceil \log(l)\rceil
\end{equation*}
in the real case.
While the Toffoli cost of the factor~$l$ in the big bracket, corresponding to the individual QROM calls, can be almost quadratically compressed via $\Lambda$ by using ancilla qubits, the leading $l$, which counts the number of layers, cannot.
If the unitary~$U$ has structure, it is therefore preferable to exploit this structure to reduce the number of layers.

While the resource requirements for the general case coincide with those for the 50:50\hyp beam splitter approach in~\cite{berry2024}, our approach allows for reducing the Toffoli cost for real unitaries common in tasks such as initial state preparation by a factor of $\sqrt{2}$, as can be seen from the optimal Toffoli cost of a single layer.

\section{Preparation of block\hyp sparse MPS}
\label{sec:mps_preparation}

In this section, we show how the block\hyp sparse structure of MPS of systems with global $U(1)$\hyp symmetries can be exploited for more efficient preparation circuits.
We do this within the standard MPS preparation scheme explained in \cref{sec:introduction}.

For simplicity, we restrict ourselves to the most common case of a fermionic system consisting of $n$~sites with fixed particle number~$N$ and spin\hyp projection number~$S_z$, where the Hilbert spaces of the sites~$i\in [n]$ are spanned by a joint eigenbasis~$(\ket{d_i^j})_{j\in [4]} \coloneq (\ket{0, 0}, \ket{1, +\frac{1}{2}}, \ket{1, -\frac{1}{2}}, \ket{2, 0})$ of the local particle number and spin projection operators~$\hat{N}^i$ and $\hat{S}_z^i$.
Prominent examples of such systems are fermionic lattice models, such as the Fermi-- Hubbard model, or the electronic structure of molecular systems.
We note that the procedure works whenever we have a global abelian symmetry.

In the following, we briefly recall how the $U(1)$\hyp symmetries induced by $\hat{N}$ and $\hat{S}_z$ lead to the block\hyp sparsity of the MPS tensors~\cite{schollwock2011, singh2011}.
For this let $\hat{Q}\coloneq \sum_i \hat{Q}^i$ be any of the operators $\hat{N}$ and $\hat{S}_z$.
Given an eigenstate~$\ket{x}$ of $\hat{Q}_i$ or more generally $\hat{Q}^{<i} \coloneq \sum_{k=0}^{i-1} \hat{Q}^k$ we denote the corresponding quantum number by $q(x)$.
As the operators $\hat{N}$ and $\hat{S}_z$ commute and are therefore compatible we may apply the argument to both operators in parallel, i.e., when talking about an eigenvector of $\hat{Q}_i$ one may think of it as a joint eigenvector of $\hat{N}^i$ and $\hat{S_z}^i$.
Note that more generally, the operator $\hat{Q}= \sum_i \hat{Q}^i$ may be any operator that gives rise to a global abelian symmetry.

We consider a site~$i>1$ and, for $u\in [\chi]$, look at the state
\begin{equation*}
    \ket{u} \coloneq \sum_{d_1, \dotsc, d_{i-1}} (M_{1}^{d_1} M_{2}^{d_2}\dotsm M_{i-1}^{d_{i-1}})_u \ket{d_1,\dotsc,d_{i-1}}\text.
\end{equation*}
Assuming that $(\ket{u})_{u \in [\chi]}$ are eigenstates of~$\hat{Q}^{<i}$, then the states~$(\ket{v})_{v \in [\chi]}$ given by
\begin{equation*}
    \ket{v} \coloneq \sum_{u, d_i}\ket{u} (M_i^{d_i})_{u, v}\ket{d_i}
\end{equation*}
are eigenstates of $\hat{Q}^{<i+1}$ if and only if $(M_i^{d_i})_{u, v}$ vanishes for
\begin{equation}
\label{eq:charge_cons}
    q(u)+ q(d_i) \neq q(v)\text.
\end{equation}
Similarly, for $i=1$, the states $\ket{v} = \sum_{d_1}(M_1^{d_1})_v \ket{d_1}$ are eigenstates of $\hat{Q}_1$ iff $(M_1^{d_1})_v \neq 0$ only for $d_1, v$ with $q(d_1) = q(v)$.
By constraining the tensors accordingly, one can thus iteratively create MPS that are eigenstates of $\hat{Q}$.
Thinking about the indices of $M_i$ as carrying a charge~$q$, where $d_i$ and $u$ have incoming and $v$ outgoing directions, the constraints can be interpreted as a charge conservation rule, with $q(u)$ indicating how much charge the state~$\ket{u}$ has acquired in the first $(i-1)$~sites.
At the end, we obtain an MPS~$\ket{\Phi}$ with quantum numbers~$q(\Phi) = Q$ that can be selected by discarding all sequences of quantum numbers that do not ultimately lead to $Q$.
The restriction~\eqref{eq:charge_cons}, together with the constraint set by $Q$, strongly reduces the memory footprint of the MPS~$\ket{\Phi}$.
These resource savings can be further increased by other global abelian symmetries, such as an abelian point group symmetry, as long as all symmetries are compatible.

In practical implementations, all non\hyp zero elements~$(M_i^{d_i})_{v, u}$ whose indices carry the same charges~$q(v)$, $q(d_i)$, $q(u)$ and fulfill $q(u) + q(d_i) = q(v)$ are combined in a block that carries the block index~$(q(v), q(d_i), q(u))$, giving the tensor~$M_i$ a block\hyp sparse structure.
Looking at the corresponding matrices~$M_i^{d_i}$, the charge conservation implies that there is only a single non\hyp zero block in each block row and block column.
We note that it was shown in~\cite{Bachmayr2022} that, by exploiting the gauge degree of freedom, any MPS that is an eigenstate of an operator like $Q$ may be written in terms of block\hyp sparse tensors~$M_i$ as above.

We now want to prepare an MPS~$\ket{\Phi}$ whose tensors~$M_i$ are block\hyp sparse.
As explained in \cref{sec:introduction}, this task reduces to the implementation of the unitaries~$U_i =\begin{bmatrix} U'_i & * \end{bmatrix}$, where $U_i' \coloneq \begin{bmatrix} M_i^{1} & \dotsm & M_i^{4} \end{bmatrix}^{\!\top}$ is a $(4\chi)\times\chi$~matrix consisting of the stacked matrices~$(M_i^{1})^T, \dotsc, (M_i^{4})^T$, and $*$ is an unspecified block matrix with shape $(4\chi)\times(3\chi)$ that may be chosen freely as long as the overall matrix~$U_i$ is unitary.
The columns of $U_i'$ are orthonormal due to the right\hyp canonical form of $\ket{\Phi}$ and inherit the block\hyp sparse structure of $M_i$, with each block row containing just a single non\hyp zero block.
As shown in the first two boxes of \cref{fig:transform_to_block_diagonal}, we may therefore find a row permutation matrix~$P_r$ that merges the blocks of a block column, transforming the blue non-zero blocks of $U'\coloneq U_i'$ into a chain of rectangles.
The rectangles are at least as high as they are wide and can be expanded into unitary blocks by complementary rectangles~$R$.
In order to obtain a block\hyp diagonal matrix consisting of these unitary blocks, we write the complementary rectangles~$R$ into the unspecified part~$*$ of the matrix, as shown in red in the second box of \cref{fig:transform_to_block_diagonal}, and apply a column permutation matrix~$P_c$.
With a final permutation~$P_o$ of both the rows and columns, the blocks can be ordered according to their sizes, resulting in a block\hyp diagonal matrix~$V$ and a decomposition
\begin{equation*}
    \begin{bmatrix} U' & * \end{bmatrix} = W V Q^{\dagger}
\end{equation*}
for permutation matrices~$W\coloneq P_rP_o$ and $Q\coloneq P_c P_o$.

We implement the block\hyp diagonal matrix~$V$ of dimension $l\coloneqq 4\chi$ by using the techniques from \cref{sec:unitary_synthesis}.
Denoting the unitary blocks of $V$ by $V^1, \dots, V^k$, each block~$V^j = D^j L^j_{l_j}\cdots L^j_{1}$ may be decomposed according to equation~\eqref{eq:lay} into $R_z$\hyp $R_y$ layers~$L_r^j$ with a final diagonal phase matrix~$D^j$.
The number of layers~$l_j$ is given by the size of the unitary block~$V^j$, with the layer structures alternating between the shifted and unshifted versions of~\eqref{eq:form}.
We have $l=\sum_{j} l_j$.
The goal is to combine all $r$-th layers $(L^j_r)_{j\in [k]}$ of the different blocks into a single layer~$L_r$, giving a decomposition
\begin{equation*}
    V=D L_{s}\dotsm L_1\text.
\end{equation*}
If a block $V_j$ does not have an $r$-th layer, we insert the identity, making sure that $L_r$ has dimension $l$.
In order to use the circuits from~\cref{sec:unitary_synthesis}, each layer $L_r$ needs to be either shifted or unshifted.
To make sure that this is the case, we add an identity layer before $L_1^j$ and subsequently shift the indices in the decomposition of $V_j$ for all $j\in [k]$ for which the initial layer $L_1^j$ does not fit the format of the full layer $L_1$ determined by $L_1^1$.
The other layers are then aligned automatically.
Due to this alignment process, the total number~$s$ of layers in the final decomposition can increase from $l_1$, i.e., the block size of the largest block $V_1$, to $l_1+1$ in case there is more than one block of size~$l_1$ and the first layer of one of these blocks does not fit the alignment structure of $L_1$.

As the blocks are ordered according to their sizes, the number of angles to be loaded for the $r$-th layer is $a_r/2$ with $a_r\coloneqq \sum_{\tilde{l}_j> r}l_j$, where we use $\tilde{l}_j$ to denote the length of the decomposition of $V^j$ with a possible alignment filler layer taken into account.
The bitsize of the loaded angles is $2b$ in the general case and $b$ in the real case.
The total Toffoli count of implementing a complex $V$, without the sign fixes and using individual Toffoli/ancilla trade-off parameters~$\Lambda_r$ for each layer, is then upper bounded by
\begin{multline*}
    C_V \coloneq \sum_{r=1}^{s} \left(\biggl \lceil \frac{a_r}{2\Lambda_r} \biggl \rceil + \biggl\lceil \log\left(\frac{l}{a_r}\right)\biggl \rceil + 2b\Lambda_r \right)+ \biggl \lceil \frac{l}{\Lambda} \biggr \rceil \\ + b\Lambda + s\lceil \log(l)\rceil
\end{multline*}
with the cost mostly determined by the size of the largest block.
The terms $\lceil \log(l/a_r) \rceil$ are needed since, even though only $a_r/2$ angles are loaded via a QROAM call with $\lceil \log(a_r/2) \rceil$ address qubits, we need to ensure that no unwanted values are written into the angle registers by controlling the QROAM on the remaining $\lceil \log(l/2) \rceil - \lceil \log(a_r/2) \rceil  \leq \lceil \log(l/a_r) \rceil$ qubits.
Each control qubit requires one Toffoli.
In the real case, the terms with $\Lambda$, for the final phase-rotation, are dropped, and $2b$ is replaced with $b$ in the big parenthesis.

The next step is to find circuits for the permutation matrices $W$ and $Q$.
Instead of implementing a permutation $\ket{i} \mapsto \ket{P(i)}$ exactly, we give ourselves some freedom by only demanding the mapping to be correct up to known sign flips, i.e., we implement $\ket{i} \mapsto \pm \ket{P(i)}$, for which the sign flips can be inferred during runtime.
A circuit for such a permutation on a $v$\hyp qubit register is shown in \cref{fig:permutation}.
Given a basis state~$\ket{i}$, it proceeds as
\begin{equation*}
    \ket{i} \xmapsto{\text{QROAM}} \ket{i}\ket{P(i)} \xmapsto{\text{SWAP}} \ket{P(i)}\ket{i} \xmapsto{\text{M}_X} \pm \ket{P(i)}
\end{equation*}
by first using QROAM to write the target basis state~$\ket{P(i)}$ of $\ket{i}$ in an ancilla register, and subsequently applying a SWAP between the ancilla and working registers.
Measuring the ancilla register, and the QROAM junk registers, in the $X$~basis then leads to the working register being in the state~$\pm \ket{P(i)}$ as desired, with the sign determined by the measurement results, as explained in Appendix~\labelcref{sec:measurement_based_uncomputation}.
The Toffoli cost of the construction is the cost of the QROAM call, which is $2^v/\Lambda'' + v(\Lambda''-1)$, with $\Lambda''$ being a power of two that controls the Toffoli/ancilla trade-off.
In our case, $v=\lceil \log(l)\rceil = \lceil \log(\chi)\rceil + 2$, resulting in a Toffoli count of
\begin{equation*}
    C_P \coloneq \frac{2^{\lceil \log(l)\rceil}}{\Lambda''} + \lceil \log(l)\rceil(\Lambda'' -1)\text.
\end{equation*}
\begin{figure}[t]
    \centering
    \begin{tikzpicture}
        \begin{yquant}[register/separation=1.5mm]
            qubit {$\ket{i}$} states;
            ["north:$v$" {font=\protect\footnotesize, inner sep=1pt}]
            slash states;
            hspace {0.2cm} -;
            [after=states] qubit {$\ket{0}$} ancillas;
            ["north:$v$" {font=\protect\footnotesize, inner sep=1pt}]
            slash ancillas;
            [name = i, operator style={only at={0}{rounded corners}}] box {\Ifnum\idx<1{QROAM}\Else$P(i)$\Fi} states, ancillas;
            swap (states, ancillas);
            box {M$_X$} ancillas;
            settype {cbit} ancillas;
            align -;
            text {$m_x$} ancillas;
            discard ancillas;
            text {$\pm \ket{P(i)}$} states;
            discard states;
        \end{yquant}
        \draw (i-1) -- (i-0);
    \end{tikzpicture}
    \caption{
    Circuit implementing a permutation~$P$ on the basis states of a $v$\hyp qubit quantum register up to sign flips.
    The sign flips can be inferred from the $X$\hyp measurement results~$m_x$ of the ancilla register and the QROAM junk registers not shown.
    }
    \label{fig:permutation}
\end{figure}

\makeatletter
\let\oldarstrut=\@arstrut
\begin{table*}[t]
    \centering
    \begin{adjustbox}{width=\textwidth}
        \begin{tabular}{|cc||c|c|c|c|c|c|c|c|c|c|}
            \hline
            \multicolumn{2}{|c||}{\multirow{2}{*}{System}} & \multicolumn{2}{c|}{P450} & \multicolumn{2}{c|}{P450} & \multicolumn{2}{c|}{[Fe$_2$S$_2$]$^{-2}$} & [Fe$_2$S$_2$]$^{-3}$ & \multicolumn{2}{c|}{[Fe$_4$S$_4$]$^{-2}$} & [Fe$_4$S$_4$]$^{-4}$\\
            & & \multicolumn{2}{c|}{($47$e, $43$o)} & \multicolumn{2}{c|}{($63$e, $58$o)} &\multicolumn{2}{c|}{($30$e, $20$o)} &($31$e, $20$o) &\multicolumn{2}{c|}{($54$e, $36$o)} &($52$e, $36$o)\\
            \hline
            \multicolumn{2}{|c||}{Spin} & $5/2$ & $1/2$ & $5/2$ & $1/2$ & \multicolumn{2}{c|}{$0$} & $1/2$ & \multicolumn{2}{c|}{$0$} & $0$\\
            \hline
            \multirow{2}{30pt}{\;Bond\\ \;\;dim.}& CSF & $1500$ & $1500$ & $1500$ & $1500$ & $8000$ & $1000$ & $8000$ & $8000$ & $1000$ & $8000$\\
            \cline{2-2}
            & DET & $8270$ & $4565$ & $8278$ & $4751$ & $21388$ & $3302$ & $18541$ & $51827$ & $6832$ & $59408$\\
            \hline
            \multicolumn{2}{|c||}{Density$^{-1}$} & $34.5$ & $22.6$ & $32.8$ & $22.5$ & $23.5$ & $23.3$ & $22.7$ & $44.6$ & $42.5$ & $50.2$\\
            \hline
            \multirow{4}{*}{Qubits}& dense & $1062$ & $1026$ & $1074$ & $1034$ & $1985$ & $545$ & $1981$ & $2017$ & $1052$ & $2017$ \\
            \cline{2-2}
            & dense real & $1011$ & $590$ & $1013$ & $620$ & $1025$ & $544$ & $1023$ & $2016$ & $576$ & $2016$ \\
            \cline{2-2}
            & sparse & $590$ & $590$ & $620$ & $620$ & $1025$ & $544$ & $1024$ & $2016$ & $576$ & $2016$ \\
            \cline{2-10}
            \hline
            \multirow{4}{*}{Toffolis}& dense & $8.2\times10^8$ & $4.3\times10^8$ & $1.1\times10^9$ & $6.2\times10^8$ & $1.2\times10^9$ & $8.3\times 10^7$ & $9.0\times 10^8$ & $1.1\times 10^{10}$ & $5.3\times 10^8$ & $1.1\times 10^{10}$\\
            \cline{2-2}
            & dense real & $5.7\times10^8$ & $3.0\times10^8$ & $7.5\times10^8$ & $4.4\times10^8$ & $8.1\times10^8$ & $6.2\times10^7$ & $6.3\times10^8$ &$8.1\times10^9$ & $3.6\times 10^8$ & $8.1\times 10^9$\\
            \cline{2-2}
            & sparse & $3.3\times10^7$ & $2.5\times10^7$ & $4.7\times10^7$ & $3.8\times10^7$ & $6.5\times10^7$ & $5.4\times10^6$ & $5.5\times10^7$ & $3.7\times10^8$ & $2.0\times10^7$ & $3.8\times 10^8$\\
            \cline{2-12}
            & improv. & ${}\times 25$ & ${}\times 17$ & ${}\times 23$ & ${}\times 16$ & ${}\times 18$ & ${}\times 15$ & ${}\times 17$ & ${}\times 30$ & ${}\times 27$ & ${}\times 29$\\
            \hline
            \hline
            \multirow{2}{35pt}{\;QPE-\\ cost~\cite{Low2025}}& Qubits & - & - & $1150$ & $1150$ & \multicolumn{2}{c|}{$463$} & - & \multicolumn{2}{c|}{$868$} & -\\
            \cline{2-2}
            & Toffolis & - & - & $4.9\times 10^8$ & $4.9\times 10^8$ & \multicolumn{2}{c|}{$4.0\times 10^7$} & - & \multicolumn{2}{c|}{$1.7\times 10^7$} & -\\
            \cline{2-2}
            \hline
        \end{tabular}
    \end{adjustbox}
    \caption{
    Resource estimates for the preparation of MPS from DMRG calculations of some strongly correlated molecular systems.
    The scripts and other assets for reproducing the results are available on Zenodo~\cite{Rupprecht2026Zenodo}.
    The molecule data of the iron\hyp sulfur clusters are taken from \cite{ollitrault2024data} and are based on the works~\cite{Lee2023, Ollitrault2024, Sharma2014}.
    The molecule data for the active spaces of the P450\hyp enzymes were taken from~\cite{goings2022data} based on the paper~\cite{Goings2022}.
    The bond dimensions of both the spin\hyp adapted MPS version (\emph{CSF}) used for the DMRG calculations and the non--spin\hyp adapted version (\emph{SOS}), that is used for the preparation, are provided.
    For context we print the inverse of the density of the tensors, i.e., the quotient of the dense tensor sizes and the number of non-zero elements in the tensors.
    We show the Toffoli and qubit estimates as computed by \texttt{Qualtran} for the baseline scheme of~\cite{berry2024} without (\emph{dense}) and with (\emph{dense real}) the unitary synthesis improvement for real unitaries of \cref{sec:unitary_synthesis}, and for our sparse method (\emph{sparse}).
    We compute the \emph{improvement factors} from the dense to the sparse method.
    The angle bitsize~$b$ is set to~$15$ in all simulations.
    For context, we print resource estimates for running QPE\hyp based state\hyp of\hyp the\hyp art ground\hyp state energy estimation taken from Table~V in~\cite{Low2025}.
    }
    \label{tab:ressource_counts}
\end{table*}
\let\@arstrut=\oldarstrut

For realistic bond dimensions, e.g., $\chi = 1000$, this cost is of a similar magnitude as the cost of a single layer of $V$.
Indeed, in the case of a real unitary, the QROAM of such a layer loads up to $l/2 = 2\chi $ elements of bitsize $b \approx 15$, while the permutation loads $2^{\lceil \log(l)\rceil} \approx 4\chi$ elements with bitsizes given by $\lceil \log(l)\rceil = \lceil \log(\chi)\rceil + 2 = 12$.
Generally, whenever there are multiple gadgets in a quantum circuit that independently allow for Toffoli/qubit trade-offs, one should check whether there are individual gadgets that require a disproportionate number of ancilla qubits.
If those gadgets only contribute mildly to the total Toffoli count, it is sensible to adapt the corresponding trade-off parameter, leading to more favorable qubit requirements.

What remains is to look at the sign flips.
As discussed in~\cref{sec:unitary_synthesis}, the sign flips of the initial permutation~$W$ can be commuted through $V$ by dynamically adapting the rotation angles of each layer $L_r$.
The combined signs from both $W$ and $V$ can then be pushed through $Q^\dagger$ by noting that $s_j\ket{j}$ gets mapped to $s_j\ket{Q^\dagger(j)}$.
Including the sign flips from $Q^{\dagger}$ itself, we can apply a standard sign\hyp fixup routine~\cite{Berry2019} after $Q^\dagger$ with Toffoli cost
\begin{equation*}
    C_F\coloneq \frac{2^{\lceil \log(l)\rceil}}{\Lambda'} + \Lambda'
\end{equation*}
where $\Lambda'$ is, as usual, a power of two controlling the Toffoli/qubit trade-off.
The overall cost for the implementation of $\begin{bmatrix} U' & *\end{bmatrix}$ is then $C_V + 2C_P + C_F$.

To benchmark our scheme, we choose some strongly correlated molecular systems used in various papers about ground\hyp state energy estimation~\cite{Low2025, Goings2022} and initial state preparation~\cite{berry2024, Lee2023, Ollitrault2024}.
The results are summarized in \cref{tab:ressource_counts} with the scripts and other assets for reproducing them available on Zenodo~\cite{Rupprecht2026Zenodo}.
We run spin\hyp adapted DMRG calculations with the \texttt{block2} library~\cite{zhai2023} and transform the resulting MPS from spin\hyp adapted form into non\hyp spin\hyp adapted form, with the latter generally having a larger maximal bond dimension.
The maximal bond dimensions used for the calculations are chosen as in~\cite{Ollitrault2024}, with additional runs for some of the Iron-Sulfur clusters with the bond dimension chosen as in~\cite{berry2024}.
For comparison, in Appendix~\labelcref{sec:additional_ressource_estimates} we give resource estimates for MPS preparation where the bond dimension of the non\hyp spin\hyp adapted version is truncated to the spin\hyp adapted value.
For systems with non\hyp zero spin, the DMRG calculation is done within a singlet embedding~\cite{sharma2012}, and the non\hyp spin\hyp adapted form is a combination of different spin projection states, leading to $B_0$, as defined in~\cref{sec:introduction}, being greater than one.
This can easily be taken into account by adapting the state preparation~$U_1$, for which we use the Grover–-Rudolph variant described in~\cite{Rupprecht2026}.
We extract the tensors with \texttt{pyblock3}~\cite{zhai} and compute the resource estimates with \texttt{Qualtran}.
After visual inspection we set the Toffoli/ancilla parameters of the QROAMs such that outliers with disproportional high ancilla-qubit numbers are removed.

As a baseline, we use the MPS preparation scheme of Berry et al.~\cite{berry2024}, in which the matrices~$U_i$ are decomposed into blocks of unitaries of size~$\chi$, as recalled in Appendix~\labelcref{sec:details_on_berry_paper}.
We run the Berry et al. scheme with and without the unitary synthesis improvements from \cref{sec:unitary_synthesis}, validating the $\sqrt{2}$ speedup.
Exploiting the block\hyp sparsity with our preparation scheme, we observe Toffoli count improvement factors of around $10 - 30$ (cf. also Appendix~\labelcref{sec:additional_ressource_estimates}) compared to the baseline.
We expect similar improvements for other systems.
The speed-up is greater for MPS of higher bond dimension where the density of non-zero entries in the tensors is lower.
If we were to perfectly exploit the sparsity of the MPS tensors, taking the $\sqrt{2}$-factor from the improved unitary synthesis into account, we would expect a speedup of $\sqrt{2}$ times the inverse of this density.
In the examples we generally achieve roughly half of this value.

For some of the molecules, resource estimates for running state-of-the-art QPE-based ground-state energy estimation were given in~\cite{Low2025}.
We print these values for context, noting that the cost of the MPS preparation can be further adapted by compressing the MPS via reduction of the bond dimension, often with only little difference in the quality of the state as seen in Appendix~\labelcref{sec:additional_ressource_estimates}.
In order to get an estimate on the general quality of an MPS as initial state for a QPE calculation, i.e., high overlap with the desired ground-state, the empirical method proposed in~\cite{berry2024} may be used.
As we do not discuss these topics in more detail, the estimates in \cref{tab:ressource_counts} should only be taken as a benchmark of the effectiveness of our preparation approach on a range of MPS with different bond dimensions and not as a table of initial state preparation costs for the respective molecules.

\section{Conclusion and Outlook}
\label{sec:conclusion}

In this work, we have developed improved methods for preparing matrix product states with symmetry\hyp induced block\hyp sparse structure commonly encountered in quantum chemistry and condensed\hyp matter systems.
In those systems, $U(1)$\hyp symmetries from particle number and spin projection conservation result in the MPS tensors being block\hyp sparse, which is widely used in classical algorithms such as DMRG.
We exploit the block\hyp sparsity within the standard ancilla\hyp assisted linear\hyp depth preparation method of~\cite{Schoen2005} by implementing row and column permutations that transform the block\hyp sparse matrices into block\hyp diagonal form.
These block\hyp diagonal matrices can then be synthesized, with the cost being determined by the size of the largest block.
To this end, we also improve the Toffoli count of unitary synthesis for real unitaries within the approach of~\cite{berry2024} by a factor of $\sqrt{2}$.

When transforming the block\hyp sparse matrix into block\hyp diagonal form, we need to insert non\hyp trivial values in the unspecified part of the matrix in order to expand the rectangles into unitary blocks.
From a parameter counting perspective, these values are not needed and cause costs when loading the respective angles.
It is therefore an interesting question whether more efficient schemes for the blocking step, together with the subsequent unitary synthesis, are possible, e.g., via ideas from~\cite{kottmann2026}.
Another direction to pursue is the adaptation of our scheme for MPS in spin-adapted $\mathit{SU}(2)$ form~\cite{SinghSU22012}, where the computational basis states correspond to configuration state functions instead of Slater determinants.

Applying the new MPS preparation techniques to various systems from quantum chemistry, we see Toffoli cost improvement factors of around $10 - 30$ compared to the state of the art~\cite{berry2024}.
As due to recent algorithmic improvements for ground\hyp state energy estimation~\cite{Low2025}, the cost of preparing an initial state can be of similar magnitude as the phase estimation task itself, this is relevant for allowing potential QPE improvements to bring down the end\hyp to\hyp end cost of ground\hyp state energy estimation further.
It moreover allows the preparation of higher-quality initial states.

In this context, it is an interesting question how the direct preparation of MPS compares to computing a sparse sum of Slater determinants (SOS) representation from the MPS and preparing this SOS state directly~\cite{Fomichev2024, Rupprecht2026}.
Similarly, comparisons to techniques of adiabatic~\cite{Lee2023, han2026} or dissipative~\cite{motlagh2024, Li2025, Ding2024} nature would be desirable.
For the Fermi--Hubbard model, which is often seen as particularly suitable for quantum simulation due to efficient methods for time evolution~\cite{Kan2025, huggins2025}, the cost of preparing suitable initial states has not yet received much attention.
To this end, it would be interesting to see whether our approach allows for preparing MPS of sufficient quality~\cite{liu2025} without changing the magnitudes of the resource requirements for the end\hyp to\hyp end algorithm.

The algorithms in this paper are implemented as \texttt{Qualtran} bloqs, with the more resource-intensive task of finding the Givens decompositions of the unitary matrices outsourced into Rust code.
The runtime of this task may be further improved by a GPU implementation.
Similarly, the implementation of the block-diagonalization procedure explained in~\cref{sec:mps_preparation}, using general sparse arrays at the moment, may profit from directly working on the data structures from the DMRG-libraries such as \texttt{pyblock3}.
All code and other assets prepared for the paper are available on Zenodo~\cite{Rupprecht2026Zenodo}.

In conclusion, we expect our MPS preparation approach to be an integral part in connecting classical and quantum simulation of systems from chemistry and condensed\hyp matter physics.

\section*{Acknowledgements}

This work has been funded by the Ministry of Economic Affairs, Labour and Tourism Baden\hyp Württemberg through the Competence Center Quantum Computing Baden\hyp Württemberg (KQCBW).
We thank Benjamin Desef for valuable discussions.
The circuit diagrams in the paper were created with the \texttt{yquant} package~\cite{desef2021}.

\bibliography{references.bib}
\onecolumn
\appendix
\newpage

\section{Measurement-based uncomputation}
\label{sec:measurement_based_uncomputation}
Throughout the paper, we often use measurement\hyp based uncomputation for resetting ancilla qubits, allowing us to avoid physical uncomputation of those ancillas~\cite{berry2024, Rupprecht2026, gidney2025, babbush2026}.
In this section, we briefly recall this technique.

Given a state~$\sum_k a_k \ket{k}\ket{f(k)}^a$, where $\ket{f(k)}^a$ is a computational basis state in an ancilla register belonging to the basis state~$\ket{k}$, the qubits of the ancilla register are measured in the $X$~basis and discarded afterwards.
Recall that $\ket{0} = (\ket{+}+\ket{-})/\sqrt{2}$ and $\ket{1} = (\ket{+}-\ket{-})/\sqrt{2}$.
Assuming for now that the ancilla register consists of a single qubit, the state after the measurement process is $\sum_k a_k \ket{k}$ if $\ket{+}^a$ was measured, and $\sum_k (-1)^{f(k)} a_k \ket{k}$ if the measurement yields $\ket{-}^a$.
Generalizing this to multiple ancilla qubits results in the state~$\sum_k (-1)^{F(i)} a_k \ket{k}$, where $F(i) \coloneq \sum_{i\in I} f_i(k)$ sums over the values~$f_i(k)$ of $f(k)$ at the bits~$i \in I$ for which the measurement result of the corresponding ancilla qubit was $\ket{-}$.

The sign flips inferred from the measurement results can then be taken into account in subsequent computations and, if required, be corrected using standard methods~\cite{kottmann2026, Berry2019}.
When multiple registers need to be uncomputed at different times, it is sensible to try and commute the phase flips to a point in the circuit where the joint sign corrections can be applied at once, instead of fixing them directly after each measurement.
When doing so, the parts of the circuit that come after a measurement might need to be adapted in order to commute the sign flips through.
Since the measurement results are only available during runtime, one needs to be able to make the necessary adaptations on the fly, which means that the logic for finding those adaptations should not require large amounts of classical computation.

\section{Details on the Berry et al. MPS preparation method}
\label{sec:details_on_berry_paper}

In this section, we give some details on our implementation of the MPS preparation scheme from the paper~\cite{berry2024}.
In the main part, this is referred to as the dense preparation method.
We repeatedly make use of the following lemma.
\begin{lemma}
    Let $A, B \in \C^{m \times k}$ be complex $m\times k$~matrices such that the columns in the block matrix~$
    \begin{bsmallmatrix}
        A\\
        B
    \end{bsmallmatrix}
    $
    are orthonormal, then there exist unitary matrices~$U_1, U_2 \in \C^{m \times m}$, a unitary~$V\in \C^{k \times k}$, and real diagonal matrices~$D_1, D_2 \in \R^{m \times k}$ such that
    \begin{equation}
    \label{eq:dec_berry}
        \begin{bmatrix}
            A\\
            B
        \end{bmatrix}
        =
        \begin{bmatrix}
            U_1 & 0\\
            0 & U_2
        \end{bmatrix}
        \begin{bmatrix}
            D_1 & -D_2\\
            D_2 & D_1
        \end{bmatrix}
        \begin{bmatrix}
            V\\
            0
        \end{bmatrix}
    \end{equation}
    and the block matrix~$
    D \coloneq
    \begin{bsmallmatrix}
        D_1 & -D_2\\
        D_2 & D_1
    \end{bsmallmatrix}
    $ in the middle has orthonormal columns.
    If $A$ and $B$ are real, all matrices in the decomposition can be made real.
\end{lemma}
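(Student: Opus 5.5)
The statement is the (thin) CS decomposition. I would prove it directly from a singular value decomposition of $A$, choosing $U_2$ to fit $B$. Throughout I assume $k\le m$. This is implicit in the statement: if $k>m$, the columns $m+1,\dots,k$ of $D_1$ and $D_2$ vanish, so $D$ could not have orthonormal columns. It is also the situation in which the lemma is used, e.g. $m=2\chi\ge k=\chi$.

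\emph{Step 1 (SVD of $A$).} Write $A=U_1 D_1 V$ with $U_1\in\C^{m\times m}$ and $V\in\C^{k\times k}$ unitary, and $D_1\in\R^{m\times k}$ diagonal with entries $\sigma_1,\dots,\sigma_k\ge 0$. Orthonormality of the columns of $\begin{bsmallmatrix} A\\ B\end{bsmallmatrix}$ means $A^\dagger A+B^\dagger B=I_k$. Hence
\begin{equation*}
(BV^\dagger)^\dagger(BV^\dagger)=I_k-D_1^\top D_1=\diag(1-\sigma_1^2,\dots,1-\sigma_k^2).
\end{equation*}
In particular $\sigma_j\le 1$. The columns $b_1,\dots,b_k$ of $BV^\dagger$ are mutually orthogonal, and $\|b_j\|=s_j\coloneq\sqrt{1-\sigma_j^2}$.

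\emph{Step 2 (construct $U_2$ and $D_2$).} For every $j$ with $s_j>0$, set $u_j\coloneq b_j/s_j$; these vectors are orthonormal in $\C^m$. For the remaining indices $j\le k$, choose $u_j$ so that $u_1,\dots,u_k$ is an orthonormal family. This is possible because $k\le m$. Then extend the family to an orthonormal basis $u_1,\dots,u_m$ and let $U_2\coloneq[u_1\,\cdots\,u_m]$. Let $D_2\in\R^{m\times k}$ be diagonal with entries $s_j$. By construction $BV^\dagger=U_2D_2$, column by column, since $b_j=s_ju_j$ holds also when $s_j=0$. Therefore $B=U_2D_2V$. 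Multiplying out the right-hand side of \eqref{eq:dec_berry} gives exactly $\begin{bsmallmatrix} U_1D_1V\\ U_2D_2V\end{bsmallmatrix}$, because the zero block below $V$ kills the $-D_2$ and second $D_1$ columns.

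\emph{Step 3 (orthonormality of $D$).} All four blocks of $D$ are diagonal, so columns with distinct diagonal index $j$ have disjoint supports and are orthogonal. Column $j$ of the first block column has norm squared $\sigma_j^2+s_j^2=1$, and so does column $k+j$. The inner product of these two columns is $-\sigma_js_j+s_j\sigma_j=0$.

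\emph{Real case and main obstacle.} If $A$ and $B$ are real, the real SVD gives real orthogonal $U_1$ and $V$. The vectors $b_j$ are then real, and Gram--Schmidt extension can be done over $\R$, so $U_2$ is real orthogonal; $D_1$ and $D_2$ are real by construction. The only delicate point is Step 2: $D_2$ must share the \emph{same} $V$ and be diagonal, which forces column $j$ of $U_2$ to be aligned with $b_j$. This is where both the orthogonality of the $b_j$ from Step 1 and the assumption $k\le m$ are needed, the latter to complete the family $u_1,\dots,u_k$ when some $s_j$ vanish.
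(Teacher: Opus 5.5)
Your proof is correct and follows essentially the same route as the paper: an SVD $A=U_1D_1V$, then factoring $BV^\dagger$ as a matrix with orthonormal columns times the diagonal $\sqrt{I-D_1^\top D_1}$. The paper gets this factorization from a polar decomposition, with diagonality of $\tilde D_2$ following from uniqueness of the positive semidefinite square root; you get it by normalizing the mutually orthogonal columns of $BV^\dagger$ directly. Your observation that $k\le m$ is needed matches the paper's own padding assumption $m\ge k$, which its polar factor $\tilde U_2\in\C^{m\times k}$ with orthonormal columns also requires.
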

\begin{proof}
    From a singular value decomposition of $A$, we get unitary matrices~$U_1 \in \C^{m \times m}$ and $V \in \C^{k \times k}$ together with a real diagonal matrix~$D_1 \in \R^{m \times k}$ such that $A=U_1 D_1 V$.
    Moreover, a polar decomposition of the matrix~$BV^{\dagger} \in \C^{m \times k}$ yields $\tilde{U}_2\in \C^{m \times k}$ with orthonormal columns and a positive semi\hyp definite hermitian matrix~$\tilde{D}_2 \in \C^{k \times k}$ such that $B=\tilde{U}_2\tilde{D}_2V$.
    Expanding $\tilde{U}_2$ to a unitary~$U_2\in \C^{m \times m}$ and padding $\tilde{D}_2$ with zeroes to a matrix~$D_2 \in \C^{m \times k}$, we get $B=U_2 D_2 V$.
    From the orthonormality of the columns of $
    \begin{bsmallmatrix}
        A\\
        B
    \end{bsmallmatrix}$,
    one has
    \begin{equation*}
        I = A^{\dagger}A + B^{\dagger}B =V^{\dagger} D_1^{\dagger} D_1 V + V^{\dagger} D_2^{\dagger} D_2 V
    \end{equation*}
    and hence $\tilde{D}_2^2 = \tilde{D}_2^{\dagger}\tilde{D}_2 = D_2^{\dagger}D_2 =  I-D_1^{\dagger}D_1$, since $\tilde{D}_2$ is hermitian.
    As $D_1$ is diagonal with real entries, so is $\tilde{D}_2$ and consequently $D_2$, and it only remains to show that the middle matrix has orthonormal columns.
    This can be verified by direct computation of $D^{\dagger}D$, making use of the fact that $D_1$ and $D_2$ are diagonal and $D_2^{\dagger}D_2 =  I-D_1^{\dagger}D_1$.
\end{proof}

The lemma guarantees the existence of a decomposition as in equation (30) of reference~\cite{berry2024}.
In the paper, it is claimed that (for $m=k$) a QR decomposition of $BV^{\dagger}$ yields the desired $D_2$, however, if $B$ does not have full rank, the $D_2$ from the QR decomposition in general is not diagonal.
We fix this by replacing the QR decomposition by a polar decomposition.

Given an MPS $\ket{\Phi}$, we consider a site $i$ of local dimension $4$ and want to find a sequence of unitaries that implements~$U_i = \begin{bmatrix} U_i' & *\end{bmatrix}$ as defined in \cref{sec:introduction}, where \smash{$U_i' = \begin{bmatrix} M_i^{1} & \cdots & M_i^{4} \end{bmatrix}^{\!\raisebox{-1pt}{\ensuremath{\scriptstyle\top}}}$} and $*$ is a block matrix that can be freely chosen as long as $U_i$ is unitary.
For ease of notation, we define $A_{j}\coloneq (M_i^{j})^\top$.
Diverting from the main text and~\cite{berry2024}, we remove the assumption that the incoming and outgoing bond dimensions of $M_i$ are both equal to $\chi$ and instead consider the $A_j$ as matrices in $\C^{m \times k}$.
Applying suitable padding if necessary, we assume $m\geq k$.
The scheme in \cref{sec:mps_preparation} works without change in this more general situation.
The idea of the paper~\cite{berry2024} is to expand the matrix $U'\coloneq U_i'$, consisting of orthonormal columns, into a unitary by writing it as a series of unitaries on smaller subspaces.
We start by decomposing the stacked block matrices
\begin{align*}
    U'=
    \begin{bmatrix}
        A_1\\
        \red{A_2}\\
        \red{A_3}\\
        \red{A_4}
    \end{bmatrix}
    =
    \begin{bmatrix}
        I & 0\\
        0 & B_2\\
        0 & \blue{B_3}\\
        0 & \blue{B_4}
    \end{bmatrix}
    \begin{bmatrix}
        A_1\\
        R_2
    \end{bmatrix}
    =
    \begin{bmatrix}
        I & 0 & 0\\
        0 & I & 0\\
        0 & 0 & C_3\\
        0 & 0 & C_4
    \end{bmatrix}
    \begin{bmatrix}
        I & 0\\
        0 & B_2\\
        0 & S_3\\
    \end{bmatrix}
    \begin{bmatrix}
        A_1\\
        R_2
    \end{bmatrix}
\end{align*}
by first applying a QR decomposition to the lower three blocks (red) of $U'$, resulting in blocks $B_{i} \in \C^{m \times m}$ that, stacked on top of each other, have orthonormal columns, and $R_2 \in \C^{m \times k}$, which is an upper triangular matrix.
Analogously, another QR decomposition applied to the lower two blocks (blue) of the $B_i$ block-matrix yields $C_3, C_4 \in \C^{m \times m}$ that, stacked, also have orthonormal columns, and an upper-triangular matrix~$S_{3} \in \C^{m \times m}$.
The columns of the matrix~$
\begin{bsmallmatrix}
    B_2\\
    S_3
\end{bsmallmatrix}
$ are orthonormal, as can be seen by computing
\begin{equation*}
    B_2^{\dagger}B_2 + S_3^{\dagger}S_3 = B_2^{\dagger}B_2 + S_3^{\dagger}(C_3^{\dagger}C_3 + C_4^{\dagger}C_4)S_3 = B_2^{\dagger}B_2 + B_3^{\dagger}B_3 + B_4^{\dagger}B_4 = I
\end{equation*}
and the same holds true for $
\begin{bsmallmatrix}
    A_1\\
    R_2
\end{bsmallmatrix}
$.
We may therefore decompose each of the matrices further by applying the lemma, leading to
\begin{multline*}
    U' =
    \begin{bmatrix}
        A_1\\
        A_2\\
        A_3\\
        A_4\\
    \end{bmatrix}
    =
    \begin{bmatrix}
        I & 0 & 0 & 0\\
        0 & I & 0 & 0\\
        0 & 0 & U_3 & 0\\
        0 & 0 & 0 & U_4
    \end{bmatrix}
    \begin{bmatrix}
        I & 0 & 0 & 0\\
        0 & I & 0 & 0\\
        0 & 0 & D_3 & -D_3'\\
        0 & 0 & D_3' & D_3
    \end{bmatrix}
    \begin{bmatrix}
        I & 0 & 0\\
        0 & I & 0\\
        0 & 0 & V''\\
        0 & 0 & 0
    \end{bmatrix}
    \begin{bmatrix}
        I & 0 & 0\\
        0 & U_2 & 0\\
        0 & 0 & U
    \end{bmatrix}
    \begin{bmatrix}
        I & 0 & 0\\
        0 & D_2 & -D_2'\\
        0 & D_2' & D_2
    \end{bmatrix}
    \begin{bmatrix}
        I & 0\\
        0 & V'\\
        0 & 0
    \end{bmatrix}\\
    \begin{bmatrix}
        U_1 & 0\\
        0 & U'
    \end{bmatrix}
    \begin{bmatrix}
        D_1 & -D_1'\\
        D_1' & D_1
    \end{bmatrix}
    \begin{bmatrix}
        V\\
        0
    \end{bmatrix}.
\end{multline*}
where all matrix blocks are $m \times m$ except $D_1, D_1' \in \R^{m \times k}$ and $V \in \C^{k \times k}$.
Setting $W_1 \coloneq V'U'$, $W_2 \coloneq V''U$, the decomposition
\begin{multline*}
    \begin{bmatrix}
        A_1 & * & * & *\\
        A_2 & * & * & *\\
        A_3 & * & * & *\\
        A_4 & * & * & *\\
    \end{bmatrix}
    =
    \begin{bmatrix}
        U_1 & 0 & 0 & 0\\
        0 & U_2 & 0 & 0\\
        0 & 0 & U_3 & 0\\
        0 & 0 & 0 & U_4\\
    \end{bmatrix}
    \begin{bmatrix}
        I & 0 & 0 & 0\\
        0 & I & 0 & 0\\
        0 & 0 & D_3 & -D_3'\\
        0 & 0 & D_3' & D_3\\
    \end{bmatrix}
    \begin{bmatrix}
        I & 0 & 0 & \blue{0}\\
        0 & I & 0 & \blue{0}\\
        0 & 0 & W_2 & \blue{0}\\
        \blue{0} & \blue{0} & \blue{0} & \blue{W_3}\\
    \end{bmatrix}
    \begin{bmatrix}
        I & 0 & 0 & \blue{0}\\
        0 & D_2 & -D_2' & \blue{0}\\
        0 & D_2' & D_2 & \blue{0}\\
        \blue{0} & \blue{0}& \blue{0}&\blue{I}
    \end{bmatrix}\\
    \begin{bmatrix}
        I & 0 & \blue{0} & \blue{0}\\
        0 & W_1 & \blue{0} & \blue{0}\\
        0 & 0 & \blue{W_1} & \blue{0}\\
        \blue{0} & \blue{0} & \blue{0} & \blue{0}
    \end{bmatrix}
    \begin{bmatrix}
        D_1 & -D_1' & \blue{0} & \blue{0}\\
        D_1' & D_1  & \blue{0} & \blue{0}\\
        \blue{0} & \blue{0} &\blue{D_1} & \blue{-D_1'}\\
        \blue{0} & \blue{0} & \blue{D_1'} & \blue{D_1}
    \end{bmatrix}
    \begin{bmatrix}
        V& \blue{0} & \blue{0} & \blue{0}\\
        0 & \blue{V} & \blue{0} & \blue{0}\\
        \blue{0} & \blue{0} & \blue{V} & \blue{0}\\
        \blue{0} & \blue{0} & \blue{0} & \blue{V}
    \end{bmatrix}
\end{multline*}
reproduces the block column containing the $A_i$.
Here, we chose the added, free (blue) entries such that the overall matrix is unitary and the cost of implementation is favorable.
The blocks of the matrices correspond to the four subspaces defined by the two site qubits where we assume the registers to be arranged with the site register first and the $\lceil \log(m) \rceil$-qubit ancilla register second.
The unitary~$V$ does not couple different subspaces and only acts on the ancilla register itself.
It hence can be pushed to the unitary of the prior site.
Doing so, and implicitly expanding the block matrix $\begin{bsmallmatrix}
    D_1 & -D_1'\\
    D_1' & D_1\\
\end{bsmallmatrix}$ to consist of $m\times m$~blocks, one arrives at the circuit in \cref{fig:decomposition_berry}.
\begin{figure}[ht]
    \centering
    \begin{tikzpicture}
        \begin{yquant}[every control/.append style={radius=0.8mm}, register/separation=2.5mmm]
            qubit {site[0]} site0;
            qubit {anc} anc;
            qubit {site[1]} site1;
            ["north:$w$" {font=\protect\footnotesize, inner sep=1pt}]
            slash anc;
            box {$
            \begin{bmatrix}
                D_1 & -D_1'\\
                D_1'& D_1
            \end{bmatrix}
            $} (anc, site1);
            cnot site1 | site0;
            box {$W_1$} anc | site1;
            box {$
            \begin{bmatrix}
                D_2 & -D_2'\\
                D_2'& D_2
            \end{bmatrix}
            $} (anc, site0) | site1;
            cnot site1 | site0;
            box {$W_2$} anc | site0;
            box {$
            \begin{bmatrix}
                D_3 & -D_3'\\
                D_3'& D_3
            \end{bmatrix}
            $} (anc, site1) | site0;
            box {$
            \begin{bmatrix}
                U_1 & 0 & 0 & 0\\
                0 & U_2 & 0 & 0\\
                0 & 0 & U_3 & 0\\
                0 & 0 & 0 & U_4
            \end{bmatrix}
            $} (anc, site1, site0);
            box {\rotatebox{90}{SignFixes}} (anc, site0, site1);
        \end{yquant}
    \end{tikzpicture}
    \caption{
    Decomposition of $U_i = \begin{bmatrix} U_i' & * \end{bmatrix}$ into unitaries on smaller subspaces with $V$ moved to the prior site.
    The ancilla register has size $w\coloneq \log(m)$.
    For readability, we move the second qubit of the two-qubit site register below the ancilla register, yielding the order (site[0], anc, site[1]).
    In the matrix representations, it is assumed to be above, i.e., (site, anc).
    The individual unitaries are implemented up to sign flips, which are then fixed at the end.
    }
    \label{fig:decomposition_berry}
\end{figure}

As in the main text, we implement the unitaries only up to known sign flips (cf. Appendix~\labelcref{sec:measurement_based_uncomputation}) and commute those sign flips to the end of the circuit, where they are corrected via a standard construction from~\cite{Berry2019}.
Commuting sign flips through unitaries requires the unitaries to be adapted according to the incoming signs.
In the case of $W_1$ and $W_2$, this is troublesome, as the incoming signs can differ between the two blocks in which each of the unitaries appears.
However, we may use the fact that the lower right of the blocks can actually be freely chosen and can therefore absorb any incoming sign differences.

The overall Toffoli cost for $m =\chi$ is upper bounded by
\begin{equation*}
    3\left( \Bigl \lceil \frac{\chi}{\Lambda} \Bigr \rceil + b\Lambda \right) + 2(\chi +1) \left( \Bigl \lceil \frac{\chi}{2\Lambda'} \Bigr \rceil + 2b\Lambda'\right) + (\chi +1) \left( \Bigl \lceil \frac{2\chi}{\Lambda''} \Bigr \rceil + 2b\Lambda''\right) + 3(\chi\lceil \log(\chi) \rceil + b) + \left( \Bigl \lceil \frac{4\chi}{\Lambda'''} \Bigr \rceil + \Lambda''' \right)
\end{equation*}
where the four big parentheses correspond, in order, to the $D_i$/$D_i'$ rotations, the two $W_i$ unitaries, the $U_i$ unitary, and the sign fix.
The parameters $\Lambda$, $\Lambda'$, $\Lambda''$ and $\Lambda'''$ are powers of two that control the Toffoli/qubit trade-offs.
The Toffoli count for real MPS, using our improved unitary synthesis method introduced in \cref{sec:unitary_synthesis}, can be calculated by replacing all occurrences of $2b$ by $b$ and all $(\chi +1)$ by $\chi$.
Moreover, the $b$ in the small parentheses can be removed.

\section{Resource estimates for truncated MPS}
\label{sec:additional_ressource_estimates}

In this section we complement~\cref{tab:ressource_counts} by providing resource estimates for the case in which the bond dimension of the MPS after the transformation to the non-spin-adapted form is truncated to the bond dimension of the original spin-adapted version.
As one can see, the truncated MPS often differ only slightly from the non-truncated MPS leading to a reduction of the required resources for their preparation.

\makeatletter
\let\oldarstrut=\@arstrut
\begin{table*}[ht]
    \centering
    \begin{adjustbox}{width=\textwidth}
        \begin{tabular}{|cc||c|c|c|c|c|c|c|c|c|c|}
            \hline
            \multicolumn{2}{|c||}{\multirow{2}{*}{System}} & \multicolumn{2}{c|}{P450} & \multicolumn{2}{c|}{P450} & \multicolumn{2}{c|}{[Fe$_2$S$_2$]$^{-2}$} & [Fe$_2$S$_2$]$^{-3}$ & \multicolumn{2}{c|}{[Fe$_4$S$_4$]$^{-2}$} & [Fe$_4$S$_4$]$^{-4}$\\
            & & \multicolumn{2}{c|}{($47$e, $43$o)} & \multicolumn{2}{c|}{($63$e, $58$o)} &\multicolumn{2}{c|}{($30$e, $20$o)} &($31$e, $20$o) &\multicolumn{2}{c|}{($54$e, $36$o)} &($52$e, $36$o)\\
            \hline
            \multicolumn{2}{|c||}{Spin} & $5/2$ & $1/2$ & $5/2$ & $1/2$ & \multicolumn{2}{c|}{$0$} & $1/2$ & \multicolumn{2}{c|}{$0$} & $0$\\
            \hline
            \multirow{3}{30pt}{\;Bond\\ \;\;dim.}& CSF & $1500$ & $1500$ & $1500$ & $1500$ & $8000$ & $1000$ & $8000$ & $8000$ & $1000$ & $8000$\\
            \cline{2-2}
            & DET & $8270$ & $4565$ & $8278$ & $4751$ & $21388$ & $3302$ & $18541$ & $51827$ & $6832$ & $59408$\\
            \cline{2-2}
            & DET trunc. &  $1500$ & $1500$ & $1500$ & $1500$ & $8000$ & $1000$ & $8000$ & $8000$ & $1000$ & $8000$\\
            \hline
            \multicolumn{2}{|c||}{Overlap} & $0.9988$ & $0.9994$ & $0.9970$ & $0.9981$ & $1.0000$ & $0.9996$ & $1.0000$ & $0.9739$ & $0.8229$ & $0.9663$\\
            \hline
            \multicolumn{2}{|c||}{Density$^{-1}$} & $22.6$ & $16.8$ & $21.4$ & $16.8$ & $20.3$ & $19.2$ & $19.8$ & $33.0$ & $28.3$ & $36.2$ \\
            \hline
            \multirow{2}{*}{Qubits}& dense & $589$ & $589$ & $619$ & $619$ & $1024$ & $304$ & $1024$ & $1056$ & $336$ & $1056$\\
            \cline{2-2}
            & sparse & $349$ & $349$ & $379$ & $379$ & $545$ & $303$ & $545$ & $577$ & $335$ & $577$\\
            \cline{2-10}
            \hline
            \multirow{3}{*}{Toffolis}& dense & $1.1\times10^8$ & $1.0\times10^8$ & $1.5\times10^8$ & $1.5\times10^8$ & $2.1\times10^8$ & $1.2\times 10^7$ & $2.3\times10^8$ & $6.0\times10^8$ & $2.8\times 10^7$ & $6.0\times10^8$\\
            \cline{2-2}
            \cline{2-2}
            & sparse & $5.8\times10^6$ & $7.6\times10^6$ & $8.9\times10^6$ & $1.1\times10^7$ & $2.1\times 10^7$ & $1.2\times10^6$ & $2.2\times10^7$ & $3.2\times10^7$ & $1.9\times10^6$ & $2.9\times10^7$\\
            \cline{2-12}
            & improv. & ${}\times 18$ & ${}\times 14$ & ${}\times 17$ & ${}\times 13$ & ${}\times 10$ & ${}\times 10$ & ${}\times 10$ & ${}\times 18$ & ${}\times 14$ & ${}\times 20$\\
            \hline
            \hline
            \multirow{2}{35pt}{\;QPE-\\ cost~\cite{Low2025}}& Qubits & - & - & $1150$ & $1150$ & \multicolumn{2}{c|}{$463$} & - & \multicolumn{2}{c|}{$868$} & -\\
            \cline{2-2}
            & Toffolis & - & - & $4.9\times 10^8$ & $4.9\times 10^8$ & \multicolumn{2}{c|}{$4.0\times 10^7$} & - & \multicolumn{2}{c|}{$1.7\times 10^7$} & -\\
            \cline{2-2}
            \hline
        \end{tabular}
    \end{adjustbox}
    \caption{Additional resource estimates for the preparation of MPS from DMRG calculations of some strongly correlated molecular systems.
    The systems are the same as in~\cref{tab:ressource_counts} with the difference being that the MPS after the transformation from spin\hyp adapted to non\hyp spin\hyp adapted form are truncated to the bond dimension (\emph{DET trunc.}) of the spin\hyp adapted value.
    We list the overlap between the truncated and non-truncated MPS.
    }
    \label{tab:ressource_counts_truncated}
\end{table*}
\let\@arstrut=\oldarstrut
\end{document}